\theoremstyle{plain}
\newtheorem{theorem}{Theorem}
\newtheorem{claim}{Claim}
\newtheorem{lemma}{Lemma}
\theoremstyle{definition}
\newtheorem{definition}{Definition}
\newcommand{\bd}{\begin{definition}}
\newcommand{\ed}{\end{definition}}
\newcommand{\bt}{\begin{theorem}}
\newcommand{\et}{\end{theorem}}
\newcommand{\be}{\begin{equation}}
\newcommand{\ee}{\end{equation}}
\newcommand{\blem}{\begin{lemma}}
\newcommand{\elem}{\end{lemma}}
\DeclareMathAlphabet{\mathpzc}{OT1}{pzc}{m}{it} 
 \DeclareFontFamily{OT1}{pzc}{}
 \DeclareFontShape{OT1}{pzc}{m}{it}{ <-> s*[1.2] pzcmi7t }{}
 \DeclareMathAlphabet{\mathpzc}{OT1}{pzc}{m}{it}
\def\R{{{\mathbb R}}}
\newcommand{\Tr}{\operatorname{Tr}}
\def\E{\mathscr{E}}
\def\O{\mathscr{O}}
\newcommand{\norm}[1]{\left\lVert#1\right\rVert}
\begin{document}									
\preprint{APS/123-QED}

\title{Geometry from quantum temporal correlations}

\author{James Fullwood}
\email{fullwood@hainanu.edu.cn}
\affiliation{School of Mathematics and Statistics, Hainan University, 58 Renmin Avenue, Haikou, 570228, China}
\author{Vlatko Vedral}
\affiliation{Clarendon Laboratory, University of Oxford, Parks Road, Oxford OX1 3PU, United Kingdom}

\date{\today}

\begin{abstract}
In this work, we show how Euclidean 3-space uniquely emerges from the structure of quantum temporal correlations associated with sequential measurements of Pauli observables on a single qubit. Quite remarkably, the quantum temporal correlations which give rise to geometry are independent of the initial state of the qubit, which we show enables an observer to extract geometric data from sequential measurements without the observer having any knowledge of initial conditions. Such results suggest the plausibility that space itself may emerge from quantum temporal correlations, and we formulate a toy model of such a hypothetical phenomenon. 
\end{abstract}

\maketitle

\section{Introduction} 
There is a growing consensus in theoretical physics that spacetime is not a primitive notion \cite{Wheeler_1989}. In particular, opposed to a process of quantization where our quantum descriptions of nature are derived from a classical starting point which assumes the existence of space and time, it is now widely held that the notions of space and time should emerge from a more fundamental description of reality, whether it be a quantum description or something else altogether \cite{Seiberg_2006,Cao_2016,Arkani_2013}. 

While there are various proposals for realizing such a viewpoint, in this work we show how three-dimensional Euclidean geometry uniquely emerges from the structure of quantum \emph{temporal} correlations associated with sequential measurements of spin observables on a single qubit. Mathematically speaking, we show that given a matrix representation of the Clifford algebra of $\R^3$, then the Euclidean metric on $\R^3$ manifests itself in the representation as the two-time correlation function on the three-dimensional space spanned by the Pauli spin matrices. Quite remarkably, it turns out that the two-time correlation function which gives rise to the Euclidean metric on the space of Pauli observables is independent of the initial state of the qubit prior to measurement, which we show enables an observer to infer such geometric structure without any knowledge of initial conditions.  

As there has been much work inspired by the curious fact that the state space of a single qubit is the 3-dimensional Bloch ball \cite{Penrose_1967,vonW_2006,Wootters_1980,M_ller_2013}, our results reveal yet another intriguing connection between a quantum bit of information and the local spatial geometry of our world. Moreover, as this connection between quantum information and space relies crucially on temporal correlations, our results link together at once the notions of quantum information, space, and time. 

\section{A mathematical theorem} 
We now formulate a theorem which serves as the mathematical foundation for our results, whose proof will be given in a later section. To set the stage, let $A$ denote a finite-dimensional quantum system, and let $\bold{Obs}(A)$ denote the algebra of observables on $A$, whose underlying set consists of all Hermitian operators on the associated Hilbert space $\mathcal{H}_A$. A real linear subspace $\mathfrak{S}\subset \bold{Obs}(A)$ will be referred to as a \emph{$\gamma$-space} if there exists a basis $\mathcal{B}$ of $\mathfrak{S}$ such that $\O_i\O_j+\O_j\O_i=2\delta_{ij}\mathds{1}$ for all $\O_i,\O_j\in \mathcal{B}$, where $\mathds{1}\in \bold{Obs}(A)$ denotes the identity operator. Such a basis $\mathcal{B}$ will be referred to as a \emph{$\gamma$-basis} for $\mathfrak{S}$. We note the the algebra generated by a $\gamma$-space of dimension $d>0$ yields an operator representation of the real Clifford algebra generated by $\R^d$ with respect to the Euclidean metric. In such a case, the associated quantum system $A$ may be viewed as a system consisting of ${\lfloor d/2 \rfloor}$ qubits. The prototypical example of a $\gamma$-space is the three-dimensional space $\mathfrak{Pauli}$ spanned by the Pauli spin matrices $\sigma_x$, $\sigma_y$ and $\sigma_z$, which are a $\gamma$-basis of $\mathfrak{Pauli}$.

Now let $\rho$ be a density operator on $\mathcal{H}_A$, and let $\mathfrak{S}\subset \bold{Obs}(A)$ be a linear subspace, possibly different from a $\gamma$-space. Suppose that the system $A$ is initially prepared in state $\rho$, after which sequential measurements are performed at times $t_1$ and $t_2>t_1$ of observables $\O_1$ and $\O_2$ in $\mathfrak{S}$. If the system $A$ evolves trivially between measurements, then the \emph{two-time correlation function} with respect to the initial state $\rho$ is the function $\mathscr{E}_{\rho}:\mathfrak{S}\times \mathfrak{S}\to \R$ given by \cite{Lostaglio_2023,FuPa24}
\begin{equation} \label{2TXCFX87}
\mathscr{E}_{\rho}(\O_1,\O_2)=\sum_{i}\lambda_i \Tr\Big[P_i \rho P_i \O_2\Big]\, ,
\end{equation}
where $\O_1=\sum_{i}\lambda_i P_i$ is the spectral decomposition of $\O_1$ and $\{\lambda_i\}$ is the set of \emph{distinct} eigenvalues of $\O_1$. Our main result is the following:
\begin{theorem} \label{TMX1}
A real linear subspace $\mathfrak{S}\subset \bold{Obs}(A)$ is a $\gamma$-space if and only if for every state $\rho$ on $A$, the two-time correlation function  $\mathscr{E}_{\rho}:\mathfrak{S}\times \mathfrak{S}\to \R$ given by \eqref{2TXCFX87} yields a real inner product on $\mathfrak{S}$, independent of $\rho$. 
\end{theorem}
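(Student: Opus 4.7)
The plan is to prove both directions by leveraging two identities for $\mathscr{E}_\rho$: a \emph{diagonal identity} $\mathscr{E}_\rho(\O,\O)=\Tr[\rho\O^2]$, which holds unconditionally, and a \emph{symmetric anticommutator formula} $\mathscr{E}_\rho(\O,\O')=\tfrac12\Tr[\rho\{\O,\O'\}]$, which will emerge in each direction but for different reasons.

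\textbf{Forward direction.} Suppose $\mathfrak{S}$ has a $\gamma$-basis $\{\O_1,\ldots,\O_d\}$. For any $\O=\sum_i a_i\O_i\in\mathfrak{S}$, the anticommutation relations immediately give $\O^2=\|a\|^2\mathds{1}$ with $\|a\|^2=\sum_i a_i^2$, so $\O$ has at most the two eigenvalues $\pm\|a\|$ with spectral projectors $P^\pm=\tfrac12(\mathds{1}\pm\O/\|a\|)$ when $\|a\|\neq 0$. Plugging this spectral decomposition into \eqref{2TXCFX87} and simplifying using $(\O/\|a\|)^2=\mathds{1}$ yields $\mathscr{E}_\rho(\O,\O')=\tfrac12\Tr[\rho\{\O,\O'\}]$ (the case $\|a\|=0$ being trivial). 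Expanding $\O'=\sum_j b_j\O_j$ and invoking $\{\O_i,\O_j\}=2\delta_{ij}\mathds{1}$ then collapses this to $\mathscr{E}_\rho(\O,\O')=\sum_i a_i b_i$, which is manifestly a $\rho$-independent Euclidean inner product.

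\textbf{Converse.} Assume $\mathscr{E}_\rho$ is a real inner product on $\mathfrak{S}$ taking the same value for every state $\rho$. Directly from \eqref{2TXCFX87}, using $P_i\O=\lambda_iP_i$ and cyclicity of the trace, one finds $\mathscr{E}_\rho(\O,\O)=\sum_i\lambda_i^2\Tr[\rho P_i]=\Tr[\rho\O^2]$. Since $\mathscr{E}_\rho$ is now assumed symmetric and bilinear, the polarization identity combined with this diagonal identity yields $\mathscr{E}_\rho(\O,\O')=\tfrac12\Tr[\rho\{\O,\O'\}]$ for all $\O,\O'\in\mathfrak{S}$. Choosing an orthonormal basis $\{\O_i\}$ of $\mathfrak{S}$ with respect to this inner product and noting that $\tfrac12\Tr[\rho\{\O_i,\O_j\}]=\delta_{ij}$ must hold for every pure state $\rho=|\psi\rangle\langle\psi|$, one concludes that the Hermitian operator $\{\O_i,\O_j\}-2\delta_{ij}\mathds{1}$ has vanishing expectation on every state and is therefore zero, so $\{\O_i\}$ is a $\gamma$-basis.

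\textbf{Main obstacle.} The key subtlety is that \eqref{2TXCFX87} is not \emph{a priori} bilinear in $\O_1$, since the spectral decomposition depends nonlinearly on $\O_1$. The forward direction bypasses this because every element of a $\gamma$-space has at most two distinct eigenvalues, making the spectral structure completely explicit; the converse sidesteps it by using the inner-product hypothesis itself to promote \eqref{2TXCFX87} to a bilinear form, then recovering the clean anticommutator expression via polarization against the unconditional diagonal identity.
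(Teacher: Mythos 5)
Your proof is correct. The forward direction is essentially the paper's own argument: both compute $\O^2=\big(\sum_i a_i^2\big)\mathds{1}$ from the anticommutation relations, write the two spectral projectors as $\tfrac12(\mathds{1}\pm\O/\|a\|)$, reduce \eqref{2TXCFX87} to $\tfrac12\Tr[\rho\{\O,\O'\}]$, and expand in the $\gamma$-basis to get $\sum_i a_ib_i$. Your converse, however, takes a genuinely different and more economical route. The paper uses the diagonal identity $\E_{\rho}(\O,\O)=\Tr[\rho\,\O^2]$ only to deduce that each orthonormal basis element, and each $(\O_i+\O_j)/\sqrt{2}$, squares to the identity, and then proceeds by case analysis: it proves each $\O_i$ is dichotomic (excluding $\O_i=\pm\mathds{1}$ by exhibiting a state with $\Tr[\rho\,\O_2]\neq 0$), proves $(\O_i+\O_j)/\sqrt{2}$ is dichotomic (excluding $\pm\mathds{1}$ by an integrality-of-trace argument), and finally extracts $\{\O_i,\O_j\}=0$ from the projector algebra of $(\O_i+\O_j)/\sqrt{2}$. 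You instead polarize the diagonal identity against the assumed symmetry and bilinearity to obtain $\E_{\rho}(\O,\O')=\tfrac12\Tr[\rho\{\O,\O'\}]$ on all of $\mathfrak{S}\times\mathfrak{S}$, and then conclude $\{\O_i,\O_j\}=2\delta_{ij}\mathds{1}$ from the standard fact that a Hermitian operator with vanishing expectation in every state is zero (the same fact the paper invokes to get $\O_i^2=\mathds{1}$). This bypasses the dichotomicity claims entirely --- in particular you never need to rule out $\O_i=\pm\mathds{1}$, since the definition of a $\gamma$-basis asks only for the anticommutation relations --- at the mild cost of not exhibiting the intermediate structural facts (that the relevant observables are all $\pm1$-valued) which the paper's presentation makes explicit and which feed its physical narrative. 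Your closing remark correctly identifies the one real subtlety, the a priori nonlinearity of \eqref{2TXCFX87} in its first argument, and both directions handle it soundly.
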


We note that while there are known results about quantum temporal correlations which are equivalent to the $\implies$ statement of Theorem~\ref{TMX1} in the case of a single qubit \cite{BTCV04,Fritz10,Emary_2013}, no connection to geometry seems to have ever been articulated. In any case, the fact that the two-time correlation function $\mathscr{E}_{\rho}$ yields an inner product on $\gamma$-spaces is quite remarkable, since for general spaces of observables $\mathscr{E}_{\rho}$ is not linear in its first argument, or even symmetric \cite{FuPa24}. Moreover, the fact that $\mathscr{E}_{\rho}$ is independent of $\rho$ for $\gamma$-spaces is perhaps even more remarkable, as it renders the knowledge of an initial state as being superfluous information for any process consisting of sequential measurements of $\gamma$-observables. 

Since a $\gamma$-space of dimension $d=3$ is necessarily isomorphic to the space $\mathfrak{Pauli}$ spanned by the Pauli spin matrices, it follows from Theorem~\ref{TMX1} that the space $\mathfrak{Pauli}$ is the unique space of observables on a single qubit whose temporal correlations yield a Euclidean structure on its underlying vector space. Conversely, it follows from Theorem~\ref{TMX1} that Euclidean 3-space is the unique geometry which emerges from temporal correlations between sequential measurements of a single qubit. As such, Theorem~\ref{TMX1} establishes a striking connection between the local geometry of our world and quantum temporal correlations whose significance is not yet well-understood. Nevertheless, this connection is succinctly captured by the equation
\[
\E(\vec{r}\cdot \sigma\,, \vec{s}\cdot \sigma)=\vec{r}\cdot \vec{s}\, ,
\]
where $\vec{r}\cdot \vec{s}$ denotes the dot product in $\R^3$, and we have suppressed the subscript on the two-time correlation function $\E$ due to its independence of the initial state $\rho$.

\section{A hypothetical model for the emergence of space} 
The usual meaning given to the Pauli spin matrices is that they represent observables associated with orthogonal directions in a pre-existing space, which for example may be identified with orientations of a Stern-Gerlach apparatus located in some laboratory. In light of Theorem~\ref{TMX1} however, it seems reasonable to suggest that perhaps the notion of \emph{space itself} may emerge from temporal correlations associated with sequential measurements of some yet to be discovered observables represented by the Pauli spin matrices. While it is not at all evident at this point how this may actually occur in nature, we nevertheless put forth a germ of an idea which we hope may eventually shed some light on the hypothetical phenomenon.

For this, imagine a universe consisting a single observer and a single qubit, and that there is no pre-existing space in this universe. We do assume however that the observer experiences a personal notion of time, and that the observer is equipped with some probe which enables them to measure the qubit in some abstract sense. The outcomes of such measurements will then be viewed as a form of ``sensory data" internal to the observer. We further assume that the possible measurements induced by this probe yield a set of observables mathematically represented by the $\gamma$-space $\mathfrak{Pauli}$, which we recall is the real linear span of the Pauli spin matrices $\sigma_x$, $\sigma_y$ and $\sigma_z$. We now utilize Theorem~\ref{TMX1} to show how the qubit may serve as a reservoir of quantum information from which Euclidean geometry emerges as the structure of temporal correlations associated with the sensory data resulting from the observer's measurements, thus forming the local geometry of their world. In particular, we show how the qubit contains the necessary information required for the observer to extract the Euclidean metric on $\mathfrak{Pauli}$ from a processing of quantum information associated with sequential measurements of the qubit, without the observer having any knowledge of the state of the qubit at any moment in time. 

So let $\vec{r}\cdot \sigma\, ,\,\vec{s}\cdot \sigma\in \mathfrak{Pauli}$ for some $\vec{r},\vec{s}\in \R^3$, and suppose the observer chooses to perform $N$ sequential measurements on the qubit at times $t_1<\cdots<t_N$, with $\vec{r}\cdot \sigma$ being measured at times $t=t_{2k-1}$ and $\vec{s}\cdot \sigma$ being measured at times $t=t_{2k}$ for all $k\in \{1,\ldots,\lceil N/2 \rceil \}$. We let $x_i$ denote the measurement outcome at time $t=t_i$ for all $i\in \{1,\ldots,N\}$. By Theorem~\ref{TMX1}, the two-time correlation function $\E_{\rho}$ as given by \eqref{2TXCFX87} is independent of $\rho$, from which it follows that the correlations between $x_i$ and $x_{i+1}$ are independent of the state of the qubit prior to measurement at time $t=t_i$, and are thus independent of $i$. Moreover, we also know from Theorem~\ref{TMX1} that $\E_{\rho}$ is an inner product, and is thus symmetric in its arguments. It then follows that $x_ix_{i+1}$ may be viewed as the product of a sequential measurement of $\vec{r}\cdot \sigma$ followed by $\vec{s}\cdot \sigma$, even if at time $t=t_i$ it was $\vec{s}\cdot \sigma$ that was measured instead of $\vec{r}\cdot \sigma$. As such, the products $x_ix_{i+1}$ may be viewed as the outcome of $N-1$ runs of a sequential measurement of $\vec{r}\cdot \sigma$ followed by $\vec{s}\cdot \sigma$, thus for every initial state $\rho$ we have
\[
\E_{\rho}(\vec{r}\cdot \sigma\,, \vec{s}\cdot \sigma)=\vec{r}\cdot \vec{s}\approx \frac{\sum_{i=1}^{N-1}x_ix_{i+1}}{N-1}  \qquad N\gg 1\, .
\]
Moreover, since $|x_1|=\norm{\vec{r}\,}$ and $|x_2|=\norm{\vec{s}\,}$, it follows that
\[
\theta\approx \cos^{-1}\left(\frac{\sum_{i=1}^{N-1}x_ix_{i+1}}{(N-1)\cdot |x_1x_2|}\right)\, ,
\]
where $\theta$ is the angle between $\vec{r}$ and $\vec{s}$. 

Therefore, as the observer interacts with the qubit by making sequential measurements, they may process the information in the form of successive products $x_ix_{i+1}$, which accumulates over time to form the information required for the structural realization of Euclidean 3-space. We note that if the measurements induced by the observer's probe are not accurately modeled by the space $\mathfrak{Pauli}$, then it follows from Theorem~\ref{TMX1} that the temporal correlations associated with the observer's measurements will \emph{not} yield a Euclidean structure on the space of observables induced by the probe. 

It is also important to emphasize here that while Theorem~\ref{TMX1} is a statement about \emph{two}-time correlations, we were able to exploit the fact that the two-time correlation function $\mathscr{E}_{\rho}$ on $\mathfrak{Pauli}$ is independent of $\rho$ by extracting $N-1$ two-time measurements from $N$ \emph{sequential} measurements, without ever having to re-prepare the state of the qubit. This fact is crucial, since the observer in our model can only ever probe the qubit, and thus does not have the ability to prepare the qubit in a fixed state. 

\section{Proof of Theorem~\ref{TMX1}} 
Let $A$ be a finite-dimensional quantum system, let $\mathfrak{S}\subset \textbf{Obs}(A)$ be a real linear subspace, and let $d$ denote the dimension of $\mathfrak{S}$. 
\\

\noindent \underline{($\implies$)}: Suppose $\mathfrak{S}$ is a $\gamma$-space, let $\{\mathscr{O}_1,\ldots,\mathscr{O}_d\}$ be a $\gamma$-basis for $\mathfrak{S}$, and let $\rho$ be a density operator on $\mathcal{H}_A$. We will now show that the two-time correlation function $\mathscr{E}_{\rho}:\mathfrak{S}\times \mathfrak{S}\to \R$ yields a real inner product on $\mathfrak{S}$ which is independent of $\rho$.

So let $\mathscr{O}=\sum_{i}\lambda_1\mathscr{O}_i$ be an arbitrary element of $\mathfrak{S}$.  We then have
\begin{align*}
\O^2&=\Big(\sum_{i}\lambda_i \O_i\Big)\Big(\sum_{j}\lambda_j \O_j\Big)=\sum_{i,j}\lambda_i\lambda_j \O_i\O_j \\
&=\sum_{i\neq j}\lambda_i\lambda_j (\O_i\O_j+\O_j\O_i)+\sum_{i}\lambda_i^2 \O_i^2 \\
&=\sum_{i\neq j}\lambda_i\lambda_j 2\delta_{ij}\mathds{1}+\sum_{i}\lambda_i^2 \mathds{1} \\
&=\sum_{i}\lambda_i^2 \mathds{1}\, ,
\end{align*}
so that the distinct eigenvalues of $\O$ are $\pm \lambda$, where $\lambda=\sqrt{\sum_{i}\lambda_i^2}$. Now let $\Pi^{\pm}$  be projection operators such that $\O=\lambda\Pi^+-\lambda \Pi^-$ and $\Pi^++\Pi^-=\mathds{1}$. It then follows that 
\begin{equation} \label{PJTXS67}
\Pi^{\pm}=\frac{1}{2}\Big(\mathds{1}\pm \frac{1}{\lambda}\O\Big)\, ,
\end{equation} 
thus for all $\mathscr{P}\in \bold{Obs}(A)$ we have
\begin{align*}
\mathscr{E}_{\rho}(\O,\mathscr{P})&=\lambda \Tr\Big[\Pi^+\rho\hspace{0.5mm} \Pi^+\mathscr{P}\Big]-\lambda \Tr\Big[\Pi^-\rho\hspace{0.5mm} \Pi^-\mathscr{P}\Big]  \\
&=\frac{\lambda}{4} \Tr\left[\frac{2}{\lambda}\Big\{\rho\, , \O\Big\}\mathscr{P}\right] &&  \\
&=\frac{1}{2}\Tr\left[\rho \hspace{0.25mm} \Big\{\O\, , \mathscr{P}\Big\}\right]\, ,
\end{align*}
where $\{\cdot\,,\cdot\}$ denotes the anti-commutator. Now let $\mathscr{P}=\sum_{j}\mu_j\mathscr{O}_j$ be the expansion with respect to the $\gamma$-basis of an element of $\mathfrak{S}$ possibly different from $\mathscr{O}$. We then have
\begin{align*}
\mathscr{E}_{\rho}(\O,\mathscr{P})&=\mathscr{E}_{\rho}\Big(\sum_{i}\lambda_i \O_i\, , \sum_{j}\mu_j \O_j\Big) \\
&=\frac{1}{2}\Tr\left[\rho \hspace{0.25mm} \Big\{\sum_{i}\lambda_i \O_i\, , \sum_{j}\mu_j \O_j\Big\}\right] \\
&=\frac{1}{2}\sum_{i,j}\lambda_i\mu_j \Tr\left[\rho \hspace{0.25mm} \Big\{ \O_i\, ,  \O_j\Big\}\right]  \\
&=\frac{1}{2}\sum_{i,j}\lambda_i\mu_j\Tr\left[\rho \hspace{0.25mm} 2\delta_{ij}\mathds{1}\right]  \\
&=\sum_{i}\lambda_i\mu_i\, , 
\end{align*}
thus $\mathscr{E}_{\rho}$ is a real inner product on $\mathfrak{S}$ which is independent of $\rho$, as desired.

We now prove the converse. An observable will be referred to as \emph{dichotomic} if and only if its set of distinct eigenvalues is $\{\pm 1\}$. 

\noindent \underline{($\impliedby$)}: Suppose that for every density operator $\rho$ on $\mathcal{H}_A$, the two-time correlation function $\mathscr{E}_{\rho}:\mathfrak{S}\times \mathfrak{S}\to \R$ given by \eqref{2TXCFX87} yields a real inner product on $\mathfrak{S}$ which is independent of $\rho$, and let $\{\O_1,\ldots,\O_d\}$ be an orthonormal basis of $\mathfrak{S}$ with respect to the real inner product induced by $\mathscr{E}_{\rho}$. We will now show that these assumptions imply that $\mathfrak{S}$ is a $\gamma$-space, and that $\{\O_1,\ldots,\O_d\}$ is a $\gamma$-basis for $\mathfrak{S}$.

\begin{claim} \label{C1}
$\O_i$ is dichotomic for all $i\in \{1,\ldots,d\}$.
\end{claim}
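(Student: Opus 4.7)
\emph{Proof plan.} The strategy is to extract the spectrum of each basis element $\O_i$ directly from the normalization and orthogonality axioms satisfied by $\E_\rho$. The single algebraic identity that powers everything is a simplification of $\E_\rho(\O,\O)$ for an arbitrary $\O\in\mathfrak{S}$ with spectral decomposition $\O=\sum_k\lambda_k P_k$ into distinct eigenvalues. Using $P_k\O P_k=\lambda_k P_k$ together with cyclicity of trace, the defining formula \eqref{2TXCFX87} collapses to $\E_\rho(\O,\O)=\sum_k\lambda_k^2\Tr[\rho P_k]=\Tr[\rho\,\O^2]$.

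First, I would apply this identity to $\O=\O_i$ and invoke normalization to obtain $\Tr[\rho\,\O_i^2]=1$ for every density operator $\rho$. Since a Hermitian operator is determined by its expectation values on pure states, this forces $\O_i^2=\mathds{1}$, so the spectrum of $\O_i$ is automatically contained in $\{+1,-1\}$.

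Second, to upgrade this inclusion to genuine dichotomicity (both $+1$ and $-1$ actually appearing), I would rule out the degenerate cases $\O_i=\pm\mathds{1}$ by combining $\rho$-independence with orthogonality to another basis element $\O_j$ with $j\neq i$. If $\O_i=\pm\mathds{1}$, the same formula reduces to $\E_\rho(\O_i,\O_j)=\pm\Tr[\rho\,\O_j]$, and requiring this to vanish for every $\rho$ forces $\O_j$ to be a scalar multiple of $\mathds{1}$ with trivial trace against every state, hence $\O_j=0$, contradicting linear independence of the basis.

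The main obstacle is this last step: it is the only place where the argument is non-mechanical, it implicitly requires $d\geq 2$, and it crucially exploits the $\rho$-independence hypothesis rather than merely the existence of an inner product. Once the identity $\E_\rho(\O,\O)=\Tr[\rho\,\O^2]$ is in hand, everything else is essentially bookkeeping, and this same identity will almost certainly be reused in the subsequent claims needed to close out the $(\impliedby)$ direction.
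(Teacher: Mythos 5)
Your proposal is correct and follows essentially the same route as the paper: the identity $\E_{\rho}(\O,\O)=\Tr[\rho\,\O^2]$, normalization of the basis forcing $\O_i^2=\mathds{1}$, and then orthogonality plus $\rho$-independence to exclude the degenerate case $\O_i=\pm\mathds{1}$. The only cosmetic difference is the final contradiction: the paper evaluates $\Tr[\rho\,\O_j]$ on a rank-one eigenprojector of $\O_j$ to obtain $0=1$, whereas you conclude $\O_j=0$ directly from $\Tr[\rho\,\O_j]=0$ for all states $\rho$ and contradict linear independence (your intermediate phrase about $\O_j$ being a ``scalar multiple of $\mathds{1}$'' is unnecessary, since vanishing expectation on all states already gives $\O_j=0$); both versions are fine.
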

\begin{proof}
Let $\mathscr{O}\in \mathbf{Obs}(A)$ be an observable, and suppose $\mathscr{O}=\sum_{i}\lambda_i P_i$ is the spectral decomposition of $\mathscr{O}$. We then have
\begin{align*}
\mathscr{E}_{\rho}(\mathscr{O},\mathscr{O})&=\sum_{i}\lambda_i \Tr\Big[P_i \rho P_i \mathscr{O} \Big]=\sum_{i}\lambda_i \Tr\Big[\rho P_i \mathscr{O}P_i \Big]  \\
&=\sum_{i}\lambda_i \Tr\Big[\rho\hspace{0.5mm} \lambda_i P_i \Big]=\Tr\Big[\rho \sum_{i}\lambda_i^2 P_i\Big] \\
&=\Tr\Big[\rho \hspace{0.25mm} \mathscr{O}^2\Big] \, ,
\end{align*}
so that for all $\mathscr{O}\in \mathfrak{S}$ we have
\begin{equation} \label{NMEQX37}
\E_{\rho}(\mathscr{O},\mathscr{O})=\Tr\Big[\rho \hspace{0.25mm} \mathscr{O}^2\Big] \, .
\end{equation}
From the assumption that $\{\mathscr{O}_i\}_{i=1}^{d}$ is an orthonormal basis of $\mathfrak{S}$ with respect to the inner product induced by $\mathscr{E}_{\rho}$, together with the assumption that $\E_{\rho}$ is independent of $\rho$, it then follows that for every state $\rho$ on $A$ we have 
$\Tr[\rho \hspace{0.25mm} \mathscr{O}_i^2]=1$. As the identity operator $\mathds{1}$ is the only operator $X$ such that $\Tr[\rho X]=1$ for all states $\rho$ on $A$, it follows that $\mathscr{O}_i^2=\mathds{1}$ for all $i\in \{1,\ldots,d\}$, thus $\mathscr{O}_i=\pm \mathds{1}$ or $\mathscr{O}_i$ is dichotomic for all $i\in \{1,\ldots,d\}$.

We now claim that $\mathscr{O}_i\neq \pm \mathds{1}$ for all $i\in \{1,\ldots,d\}$. Indeed, for a contradiction, suppose without loss of generality that $\mathscr{O}_1=\pm \mathds{1}$. It then follows that the set of distinct eigenvalues of $\mathscr{O}_2$ is necessarily $\{\pm 1\}$, so that $\O_2=\Pi^{+}-\Pi^{-}$ for some orthogonal projection operators $\Pi^{\pm}$ such that $\Pi^{+}+\Pi^{-}=\mathds{1}$. Then for a given initial state $\rho$, we have 
\begin{align*}
\E_{\rho}(\O_1,\O_2)&=\E_{\rho}(\pm \mathds{1},\O_2)=\pm \E_{\rho}(\mathds{1},\O_2) \\
&=\pm \Tr\Big[\mathds{1}\rho \mathds{1}\O_2\Big]=\pm \Tr\Big[\rho \hspace{0.25mm} \O_2\Big]\, .
\end{align*}
Now since $\O_1$ and $\O_2$ are assumed to be orthogonal with respect to the inner product induced by $\E_{\rho}$, and since $\E_{\rho}$ is independent of $\rho$, it follows that $\Tr[\rho \hspace{0.25mm} \O_2]=0$ for every state $\rho$ on $A$. Writing $\Pi^{+}=\sum_i P_i$ as a sum of orthogonal 1-dimensional projectors and taking $\rho=P_{j}$ for some $j$, we then have
\begin{align*}
0&=\Tr\Big[P_j \hspace{0.25mm} \O_2\Big]=\Tr\Big[P_j(\Pi^{+}-\Pi^{-})\Big] \\
&=\Tr\Big[P_j\Big(\sum_i P_i-\Pi^{-}\Big)\Big]=\Tr\Big[P_{j}^2\Big]=\Tr\Big[P_j\Big]=1\, ,
\end{align*}
which is obviously absurd as $0\neq 1$. It then follows that $\mathscr{O}_i\neq \pm \mathds{1}$ for all $i\in \{1,\ldots, d\}$, thus $\mathscr{O}_i$ is in fact dichotomic for all $i\in \{1,\ldots, d\}$, as desired.
\end{proof}

Since $\O_i$ is dichotomic for all $i$, we let $\Pi_{i}^{\pm}$ denote the associated projection operators onto the $\pm 1$ eigenspaces of $\O_i$, so that for all $i\in \{1,\ldots,d\}$ we have $\O_i=\Pi_{i}^{+}-\Pi_{i}^{-}$ and $\Pi_{i}^{+}+\Pi_{i}^{-}=\mathds{1}$.

\begin{claim}\label{C3}
Let $i,j\in \{1,\dots,d\}$ with $i\neq j$. Then $\frac{\O_i+\O_j}{\sqrt{2}}$ is dichotomic.
\end{claim}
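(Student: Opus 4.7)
The plan is to show that $\O := (\O_i+\O_j)/\sqrt{2}$ both squares to $\mathds{1}$ and is not proportional to $\mathds{1}$, which together will force its spectrum to be exactly $\{+1,-1\}$. Since $\mathfrak{S}$ is a real linear subspace and the $\O_k$ are Hermitian, $\O$ is itself an observable in $\mathfrak{S}$, so the identities already derived for the $(\impliedby)$ direction apply to it.

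The first step pins down $\O^2$. Because $\E_\rho$ is by hypothesis a real inner product with $\{\O_1,\ldots,\O_d\}$ as an orthonormal basis, bilinearity and symmetry immediately give $\E_\rho(\O,\O) = \tfrac12(1+0+0+1) = 1$. Combining this with equation~\eqref{NMEQX37} yields $\Tr[\rho\,\O^2] = 1$ for every state $\rho$, and the uniqueness principle already invoked in the proof of Claim~\ref{C1}---that $\mathds{1}$ is the only operator whose expectation equals $1$ on every state---then forces $\O^2 = \mathds{1}$. Consequently the distinct eigenvalues of $\O$ lie in $\{+1,-1\}$.

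The remaining step, which I expect to be the more delicate one, is to rule out the degenerate cases $\O = \pm\mathds{1}$. A short spectral argument suffices: if $\O = \varepsilon\mathds{1}$ for some $\varepsilon \in \{+1,-1\}$, then $\O_j = \sqrt{2}\,\varepsilon\,\mathds{1} - \O_i$ would have spectrum $\{\sqrt{2}\,\varepsilon - 1,\; \sqrt{2}\,\varepsilon + 1\}$, and matching this against the spectrum $\{-1,+1\}$ guaranteed by Claim~\ref{C1} forces $\sqrt{2}\,\varepsilon = 0$, contradicting $\varepsilon = \pm 1$. Hence $\O$ is not proportional to $\mathds{1}$, so both $+1$ and $-1$ must occur as its eigenvalues, which is precisely the dichotomic property claimed.
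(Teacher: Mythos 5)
Your proof is correct, and its first half is exactly the paper's argument: orthonormality plus equation~\eqref{NMEQX37} gives $\Tr[\rho\,\O^2]=1$ for every state $\rho$, forcing $\O^2=\mathds{1}$ and hence spectrum contained in $\{\pm 1\}$. Where you genuinely diverge is in excluding the degenerate case $\O=\pm\mathds{1}$. The paper takes traces: since $\O_i$ and $\O_j$ are dichotomic, $\Tr[\O_i]+\Tr[\O_j]$ is an integer, whereas $\Tr[\pm\sqrt{2}\,\mathds{1}]=\pm\sqrt{2}\dim\mathcal{H}_A$ is irrational, a contradiction. You instead solve for $\O_j=\sqrt{2}\,\varepsilon\,\mathds{1}-\O_i$ and match its shifted spectrum $\{\sqrt{2}\,\varepsilon-1,\sqrt{2}\,\varepsilon+1\}$ against the spectrum $\{-1,+1\}$ guaranteed by Claim~\ref{C1}, which forces $\sqrt{2}\,\varepsilon=0$. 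Both arguments are valid and both lean on Claim~\ref{C1}; the paper's exploits finite dimensionality through integrality of the trace (and the irrationality of $\sqrt{2}$), while yours is purely spectral and arguably more robust --- it would survive in settings where traces are unavailable, and it does not actually need the irrationality of $\sqrt{2}$, only that the two candidate spectra have different sums. Either version completes the claim.
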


\begin{proof}
Let $i,j\in \{1,\dots,d\}$ with $i\neq j$. For an arbitrary initial state $\rho$ we have
\[
\Tr\Big[\rho \hspace{0.25mm} (\O_i+\O_j)^2\Big]=\E_{\rho}(\O_i+\O_j,\O_i+\O_j)=2\, ,
\]
where the first equality follows from equation \eqref{NMEQX37} and the second equality follows from the Pythagorean Theorem. We then have
\[
\Tr\left[\rho \hspace{0.25mm} \left(\frac{\O_i+\O_j}{\sqrt{2}}\right)^2\right]=1
\]
for all states $\rho$ on $A$, from which it follows that either
\begin{equation} \label{DMX747}
\frac{\O_i+\O_j}{\sqrt{2}}=\pm \mathds{1}
\end{equation}
or $(\O_i+\O_j)/\sqrt{2}$ is dichotomic. However, since $\O_i$ and $\O_j$ are dichotomic their traces are both integers, so that taking the trace of both sides of equation \eqref{DMX747} yields a contradiction, thus $(\O_i+\O_j)/\sqrt{2}$ is necessarily dichotomic, as desired.
\end{proof}

Since $(\O_i+\O_j)/\sqrt{2}$ is dichotomic for all $i$ and $j$ with $i\neq j$, we let $\Pi_{ij}^{\pm}$ denote the associated projection operators onto the $\pm 1$ eigenspaces of $(\O_i+\O_j)/\sqrt{2}$, so that  
\[
\frac{\O_i+\O_j}{\sqrt{2}}=\Pi_{ij}^{+}-\Pi_{ij}^{-} \quad \text{and} \quad \Pi_{ij}^{+}+\Pi_{ij}^{-}=\mathds{1}
\]
for all $i,j\in \{1,\ldots,d\}$ with $i\neq j$. Now let $i,j\in \{1,\dots,d\}$ with $i\neq j$, so that we have
\begin{equation} \label{OIXJS47}
\O_i+\O_j=\sqrt{2}(\Pi_{ij}^{+}-\Pi_{ij}^{-})\, .
\end{equation}
Multiplying both sides of equation \eqref{OIXJS47} on the left by $\O_i$ and using the fact that $\O_i^2=\mathds{1}$ yields
\begin{equation} \label{IJX37}
\O_i\O_j=\sqrt{2}\O_i(\Pi_{ij}^{+}-\Pi_{ij}^{-})-\mathds{1}\, ,
\end{equation}
and similarly we have 
\begin{equation} \label{JXI37}
\O_j\O_i=\sqrt{2}\O_j(\Pi_{ij}^{+}-\Pi_{ij}^{-})-\mathds{1}\, .
\end{equation}
Adding equations \eqref{IJX37} and \eqref{JXI37} then yields
\begin{equation} \label{ANTXCM81}
\{\O_i\, ,\O_j\}=(\O_i+\O_j)^2-2\mathds{1}\, .
\end{equation}
Now since
\begin{align*}
(\O_i+\O_j)^2&=2(\Pi_{ij}^{+}-\Pi_{ij}^{-})^2 \\
&=2\Big((\Pi_{ij}^{+})^2-\Pi_{ij}^{+}\Pi_{ij}^{-}-\Pi_{ij}^{-}\Pi_{ij}^{+}+(\Pi_{ij}^{-})^2\Big) \\
&=2(\Pi_{ij}^{+}+\Pi_{ij}^{-}) \\
&=2\mathds{1}\, ,
\end{align*}
it follows from \eqref{ANTXCM81} that $\{\O_i,\O_j\}=0$ for $i\neq j$. This together with the fact that $\O_i^2=\mathds{1}$ for all $i$ yields the anti-commutator $\{\O_i\, ,\O_j\}=2\delta_{ij}\mathds{1}$ for all $i,j\in \{1,\ldots,d\}$, thus $\mathfrak{S}$ is a $\gamma$-space and $\{\mathscr{O}_1,\ldots,\mathscr{O}_d\}$ is a $\gamma$-basis for $\mathfrak{S}$, as desired.

\section{Concluding remarks} 
In this work, we established an intriguing connection between a quantum bit of information, space, and time. In particular, we showed that the the structure of temporal correlations between sequential measurements of Pauli observables on a single qubit is equivalent to the Euclidean metric on the three-dimensional space $\mathfrak{Pauli}$ spanned by the Pauli spin matrices. Conversely, we showed that the space $\mathfrak{Pauli}$ is in fact uniquely characterized by this property, as we proved that $\mathfrak{Pauli}$ is the \emph{only} space of observables on a single qubit whose two-time correlation function is precisely the Euclidean metric. 

A conspicuous feature of the temporal correlations which give rise to geometry is the fact that they are independent of initial conditions, which enables an observer to extract the associated geometric structure from sequential measurements of a qubit without ever having to prepare the qubit in a fixed state. With this in mind, we formulated a toy model of how Euclidean 3-space may emerge from quantum temporal correlations in a hypothetical universe consisting of a single qubit and a single observer. In our model, the qubit exists in a realm without any pre-existing space, as it serves as a reservoir from which an observer may extract the information necessary for a geometric structure to emerge from the correlations between their sequential measurements of the qubit over time. Of course the question remains as to how the metric of space\emph{time} may emerge in such a framework, and more generally, how gravity may enter the picture. As the viewpoint esposed in this work sees geometry as a structure on a space of observables, perhaps the recent work \cite{Bressanini_24} on \emph{quantum observables over time} may provide an avenue for future investigation along these lines.

Upon discovering the fundamental law of the lever, Archimedes is said to have summarized his discovery via the poetic statement ``Give me a lever long enough and a fulcrum on which to place it, and I shall move the world.". In a similar vein, we summarize the results of this Letter with the following statement: 

``Give me a qubit for long enough and a probe in which to measure it, and I shall extract the geometry of our world."

\textbf{Acknowledgements} JF is supported by a Hainan University startup fund for the project ``Spacetime from Quantum Information", and would like to thank Giulio Chiribella and Edgar Guzm\'an Gonz\'alez for useful discussions. VV acknowledges support from the Templeton and the Gordon and Betty Moore foundations.


\bibliography{references}

\end{document}